\documentclass[journal]{IEEEtran}

\usepackage{graphicx}
\usepackage{amsmath,amsfonts,amssymb}
\usepackage{algorithm}
\usepackage{algpseudocode}
\usepackage{mathabx}
\usepackage{hyperref}
\usepackage{array}
\usepackage{textcomp}
\usepackage{stfloats}
\usepackage{url}
\usepackage{booktabs}
\usepackage{xcolor}
\usepackage{cite}
\usepackage{multirow}
\usepackage{bm}
\usepackage{comment}
\usepackage{pgfplots}
\pgfplotsset{compat=1.18}
\usepgfplotslibrary{groupplots}
\usepackage{amsthm}
\usepackage[siunitx,american]{circuitikz} 
\usetikzlibrary{arrows.meta, positioning, calc}

\definecolor{myblue}{RGB}{86, 180, 233} 
\newcommand{\tblue}[1]{\textcolor{myblue}{#1}}

\definecolor{mygreen}{RGB}{0, 158, 115} 
\newcommand{\tgreen}[1]{\textcolor{mygreen}{#1}}

\definecolor{myorange}{RGB}{230, 159, 0} 
\newcommand{\torange}[1]{\textcolor{myorange}{#1}}

\newtheorem{lemma}{Lemma}

\DeclareMathOperator{\sgn}{sgn}

\newcommand{\argmax}{\arg\!\max}

\title{Geometry-Informed Optimization of Binary RIS Configurations for Communication and Sensing}

\author{\IEEEauthorblockN{
Angelos Gkekas, 
Alexandros I. Papadopoulos,
Petros Andreas Pantazopoulos,\\
Antonios Lalas, 
Konstantinos Votis,
and Christos Liaskos 
}
\thanks{A. Gkekas, P. A. Pantazopoulos, A. Lalas and K. Votis are with the Information Technologies Institute, CERTH, Greece (e-mails: \{angelosgkekas, ppantazopoulos, lalas, kvotis\}@iti.gr).}
\thanks{A. Papadopoulos is with the Computer Science Engineering Department, University of Ioannina, Ioannina, Greece and with the Information Technologies Institute, CERTH, Greece (e-mail: a.papadopoulos@uoi.gr/alexpap@iti.gr).}
\thanks{C. K. Liaskos is with the Computer Science Engineering Department, University of Ioannina, Ioannina, and with the Foundation for Research and Technology Hellas (FORTH), Greece (e-mail: cliaskos@uoi.gr).}
\thanks{This work has received funding from the EU’s Horizon Europe research and innovation programme in the frame of the AutoTRUST project “Autonomous self-adaptive services for TRansformational personalized inclUsivenesS and resilience in mobility” under the Grant Agreement No 101148123.}
}

\begin{document}

\maketitle

\begin{abstract}

Practical Reconfigurable Intelligent Surfaces (RISs) often support only a small number of phase states, making their configuration inherently discrete. For a 1-bit RIS with \(N\) elements, direct optimization requires searching among \(2^N\) binary configurations. This work shows that this exponential configuration space is not unstructured. By reformulating 1-bit RIS optimization as the maximization of the norm of a signed sum of channel-dependent vectors, we prove that every globally optimal configuration must be induced by the signs of their projections onto a common direction. This geometric characterization restricts the class of configurations that can contain global optima and leads to different algorithmic consequences depending on the signal-space dimension. For general Multiple-Input-Multiple-Output (MIMO) systems, we develop a geometry-informed sampling method that evaluates only structurally admissible configurations. For Single-Input-Single-Output (SISO) systems, the same principle reduces to a two-dimensional angular partition, allowing the complete candidate set to be characterized and the global optimum to be recovered through polynomial-time enumeration, by evaluating at most \(N+1\) out of the \(2^N\) configurations. Finally, we apply the same binary optimization principle to an Integrated Sensing and Communication (ISAC) scenario, where communication enhancement and target localization reduce to the same underlying geometric problem. The proposed framework therefore provides a unified approach for exploiting the structure of practical 1-bit RIS configurations across communication and sensing functionalities.

\end{abstract}

\begin{IEEEkeywords}
RIS, ISAC, MIMO, sensing, communication, optimization
\end{IEEEkeywords}

\section{Introduction}\label{sec:intro}

The transition towards Beyond-5G (B5G) and 6G networks has introduced a fundamentally different perspective on the wireless propagation environment. Instead of merely adapting to the characteristics of the wireless channel, which have traditionally been regarded as inherent constraints of the communication system, emerging technologies enable direct interaction with the propagation medium itself~\cite{LiaskosEtAl}. Through the deployment of programmable electromagnetic structures, the propagation of wireless signals can be dynamically manipulated to achieve specific communication objectives. This capability has given rise to the concept of Programmable Wireless Environments (PWEs), in which software-controlled elements actively contribute to network optimization, creating new opportunities for enhancing coverage, reliability, spectral efficiency, and energy efficiency~\cite{LiaskosEtAl2}.

Reconfigurable Intelligent Surfaces (RISs) represent one of the key enabling technologies behind the realization of PWEs~\cite{PutrantoHamzaMabroukiArmiDayoup}. These structures consist of engineered planar arrays composed of a large number of low-cost, quasi-passive elements whose electromagnetic response can be electronically tuned~\cite{PapadopoulosEtAl}. By adjusting properties such as phase shift, amplitude, or polarization, RISs can control the way incident electromagnetic waves are reflected and redirected in space~\cite{PapadopoulosEtAl2}. Through this controllable interaction with the wireless medium, RISs transform the passive propagation environment into an active component of the communication system, enabling improvements in received signal quality, coverage, interference management, and energy efficiency~\cite{PapadopoulosEtAl3,WuZhang}. In terms of 6G services, RISs have also attracted significant interest in sensing applications and Integrated Sensing and Communication (ISAC) systems, where programmable propagation can enhance tasks such as localization, detection, and environmental awareness~\cite{BuzziGrossiLopsVenturino, TishchenkoEtAl}. 
The same principle can support RF sensing in other applications, such as automotive and public-transport environments. By creating propagation conditions that make people and their movements more observable, RISs could facilitate in-cabin tasks such as occupancy detection, people counting, and passenger characterization~\cite{MunteLazaroVillarino,SatoWandaleIchigeKimuraSugiura}.

Despite their flexibility, practical RIS implementations are subject to significant hardware constraints. In particular, the phase response of each RIS element is typically controlled through simple switching circuits that provide only a limited number of discrete phase states. While many theoretical studies assume continuously tunable phase shifts, commercial and experimentally demonstrated RIS platforms commonly operate with 1-bit or 2-bit phase quantization, corresponding to two or four available phase shift values~\cite{ShekhawatKashyapRaldirisShanTrichopoulos, ZhaoJianChenZhaoMu}. Binary RIS architectures, which restrict each element to two possible phase states (typically \(0^\degree\) and \(180^\degree\)), are especially attractive due to their low hardware complexity and ease of implementation~\cite{JohariEtAl}.

The discrete nature of practical RIS hardware gives rise to a challenging optimization problem. For a binary RIS consisting of \(N\) elements, the number of possible phase configurations grows exponentially as \(2^N\), causing the computational complexity of exhaustive search to scale prohibitively with the surface size. Consequently, determining the phase configuration that maximizes a desired performance metric, such as received signal strength, achievable rate, sensing accuracy, or detection performance, becomes a large-scale combinatorial optimization task.
Meanwhile, most existing optimization approaches have been developed under the assumption of continuous phase shifts. These methods typically formulate RIS configuration as a non-convex optimization problem and employ techniques such as semidefinite relaxation~\cite{WuZhang2,BuzziGrossiLopsVenturino}, alternating optimization~\cite{WuZhang2,HuangZapponeAlexandropoulosDebbahYuen}, manifold optimization~\cite{YuXuSchober}, or gradient-based algorithms~\cite{HuangZapponeAlexandropoulosDebbahYuen,PerovicTranRenzoFlanagan} to obtain approximate solutions. When applied to discrete RIS architectures, a common strategy is to first solve the continuous problem and subsequently quantize the resulting phases~\cite{HuangAlexandropoulosZapponeDebbahYuen,ZhaoWuZhaoZhang,YuanLiangJoungFengLarson}. However, the quantization step generally destroys optimality and may lead to substantial performance degradation when the phase resolution is low \cite{NiLiuYangTianShen}. Alternatively, one may directly search the discrete configuration space, but the exponential growth of the number of feasible solutions quickly renders unstructured search methods computationally impractical.

In this work, we pursue a different direction by addressing binary RIS phase optimization directly in the discrete domain. Rather than first solving a continuous-phase problem and subsequently quantizing its solution, we investigate the structure of the binary configuration space and identify a geometric property shared by its globally optimal solutions. We exploit this property to develop optimization methods adapted to different signal-space dimensions and show that the same underlying principle can support RIS phase design for both communication and sensing functionalities.

\subsection{Related Work}

RIS configuration optimization has been extensively studied for communication systems, with objectives including received signal strength maximization, achievable rate maximization, energy efficiency improvement, interference mitigation, and coverage enhancement. In \cite{WuZhang2}, the RIS phase optimization problem is formulated as a joint active--passive beamforming problem under continuous phase shifts. In \cite{PerovicTranRenzoFlanagan}, the achievable rate of an RIS-assisted MIMO system is maximized by jointly optimizing the transmit covariance matrix and the RIS phase shifts. The work \cite{HuangZapponeAlexandropoulosDebbahYuen} studies energy-efficient resource allocation by jointly optimizing the transmit power allocation and the continuous RIS phase shifts. The work \cite{YuXuSchober} considers spectral efficiency maximization through the joint optimization of the beamforming vector and the RIS phase shifts. In \cite{ZengZhangDiHanSong}, coverage enhancement is studied through the optimization of the RIS location and orientation.
A large part of this literature assumes continuously adjustable phase shifts and formulates RIS configuration as a non-convex optimization problem. Techniques such as semidefinite relaxation, alternating optimization, manifold optimization, and gradient-based methods are then employed to obtain approximate solutions.

The limited phase resolution of practical RIS hardware has motivated increasing interest in discrete optimization. A common strategy is to first solve the continuous-phase problem and subsequently quantize the resulting configuration to the available phase states~\cite{HuangAlexandropoulosZapponeDebbahYuen,ZhaoWuZhaoZhang,YuanLiangJoungFengLarson}. Although computationally convenient, such approaches optimize over a relaxed domain and impose the hardware constraints only after the optimization has been completed. Consequently, the resulting configuration is not, in general, optimal over the actual discrete feasible set.

A different line of work addresses the discrete problem directly. Existing approaches include discrete search procedures, greedy methods, branch-and-bound techniques, and other approximation strategies. For example, in \cite{WuZhang3}, an alternating optimization approach is proposed, where the RIS phase shifts are optimized one at a time while keeping the remaining phase shifts fixed. In \cite{YouZhengZhang}, a successive refinement algorithm is proposed to optimize discrete RIS phase shifts for maximizing the achievable rate and for minimizing the mean squared error of channel estimation. The work \cite{ZhangRenShenLuo} shows that the special case of the SISO system can be solved in polynomial time, without further developing this result into an explicit algorithm. In \cite{StylianopoulosGavriilidisAlexandropoulos}, a closed-form sign-alignment rule is proposed for the SISO 1-bit RIS phase optimization problem, yielding an asymptotically optimal approximation. These approaches explicitly account for the available phase states, but they typically rely on heuristic search or problem-specific approximations rather than utilizing structural properties of the solution space. While some works exploit properties of particular channel models, system dimensions, or optimization settings to reduce the computational complexity, the broader geometric structure of the configurations that can contain globally optimal solutions, particularly across different signal-space dimensions, has received considerably less attention.

Beyond communication-oriented objectives, RIS optimization has also been widely investigated for sensing and ISAC. Existing works have considered RIS-assisted target detection and radar sensing~\cite{BuzziGrossiLopsVenturino,VejlingKimBiscioWymeerschPopovski,ShaoYouMaChenZhang}, as well as localization and environment reconstruction~\cite{SotiropoulosEtAl,HuangYangTangWenXiaJin,ElzanatyGuerraGuidiAlouni}. Joint communication and sensing has also received significant attention, with studies addressing the trade-offs and optimization strategies arising when the same RIS infrastructure supports both functionalities simultaneously~\cite{TishchenkoEtAl,WuEtAl}. These studies generally formulate and solve the corresponding communication and sensing optimization problems separately. This motivates investigating whether different RIS functionalities can be described through a common underlying discrete optimization structure.

\subsection{Motivation and Contribution}
The combinatorial nature of binary RIS optimization makes exhaustive search impractical for realistically sized surfaces. Rather than asking how to explore the entire \(2^N\)-configuration space more efficiently, this work investigates whether all binary configurations are equally relevant candidates for global optimality. The resulting contributions are summarized as follows:

\begin{itemize}

\item We reformulate 1-bit RIS phase optimization for received-signal maximization as the maximization of the norm of a signed sum of channel-dependent real vectors. We prove that every globally optimal configuration is induced by the signs of the projections of these vectors onto a common direction. This result identifies a geometrically structured class of sign patterns that contains all global optima.

\item We propose a geometry-informed sampling algorithm for the general MIMO setting. The algorithm samples directions in the underlying signal space and converts them into binary RIS configurations through the derived projection rule. Consequently, every evaluated configuration satisfies the necessary geometric condition for global optimality, allowing the sampling budget to be concentrated on structurally admissible candidates.

\item We specialize the geometric characterization to the SISO setting, where the channel-dependent vectors lie in a two-dimensional real space. We construct the resulting angular partition, enumerate its distinct sign patterns, and recover a globally optimal binary configuration by evaluating \(K\) candidates, where \(K\leq N+1\) is the number of nonzero vectors after incorporating the direct channel.

\item We apply the proposed formulation to an RIS-assisted ISAC scenario in which the RIS elements are partitioned between communication and sensing. We show that the RIS phase-design subproblems for communication-signal enhancement and sensing illumination have the same signed-vector form and can therefore be addressed using the proposed MIMO or SISO optimization method. This establishes a common binary phase-design mechanism across the two functionalities.

\end{itemize}

\subsection{Structure}

The remainder of this paper is organized as follows. In Section~\ref{sec:system_model}, we present the system model underlying the proposed optimization framework. In Section~\ref{sec:optimization}, we derive structural properties of the optimal configurations and exploit them to develop efficient algorithms for the MIMO case (Section~\ref{sec:MIMO}) and the SISO case (Section~\ref{sec:siso}). Section~\ref{sec:isac} discusses the application of the proposed methods in an ISAC setting. Numerical results and performance evaluations are provided in Section~\ref{sec:results}, while Section~\ref{sec:conclusions} offers concluding remarks.

\section{System Model}\label{sec:system_model}
Consider an array of \(N_{\text{T}}\) transmitters and an array of \(N_\text{R}\) receivers located, respectively, at:
\begin{align}
    {\boldsymbol{\psi}}_{i}^\text{T} =& \left(x_{i}^\text{T}, y_{i}^\text{T}, z_{i}^\text{T}\right), \quad i \in \{1,2, \dots, N_{\text{T}}\}, \\
    {\boldsymbol{\psi}}_{j}^\text{R} =& \left(x_{j}^\text{R}, y_{j}^\text{R}, z_{j}^\text{R}\right), \quad j \in \{1,2, \dots, N_{\text{R}}\}.
\end{align}
Furthermore, consider an RIS composed of a rectangular \(N_1 \times N_2\) grid of cells, with distance \(d_\text{cell}\) between adjacent cells. The center of the RIS grid is located at:
\begin{equation}
{\boldsymbol{\psi}}^{\text{RIS}} = \left(x^{\text{RIS}}, y^{\text{RIS}}, z^{\text{RIS}}\right). 
\end{equation}
The orientation of the RIS is defined by two unit vectors \(\widehat{\mathbf{n}}\) and \(\widehat{\mathbf{t}}\), where \(\widehat{\mathbf{n}}\) is normal to the RIS surface and \(\widehat{\mathbf{t}}\) is tangent to the direction of the grid with size \(N_1\). The direction of the grid with size \(N_2\) is determined by the unit vector \(\widehat{\mathbf{n}} \times \widehat{\mathbf{t}}\). The number of cells in the RIS grid is \(N = N_1N_2\). We denote the center of each cell by:
\begin{equation}
{\boldsymbol{\psi}}_{n}^\text{RIS} = \left(x_{n}^\text{RIS}, y_{n}^\text{RIS},z_{n}^\text{RIS}\right), \quad n \in \{1,2,\dots, N\},
\end{equation}
where the index \(n\) follows an \(N_1\)-major ordering of the cells, and the first point \(\boldsymbol{\psi}_{1}^{\text{RIS}}\) corresponds to the center of the corner cell in the directions of \(\widehat{\mathbf{t}}\) and \(\widehat{\mathbf{n}} \times \widehat{\mathbf{t}}\) relative to \(\boldsymbol{\psi}^{\text{RIS}}\).

The phase shifts introduced by the RIS can take one of two values, \(0^\degree\) and \(180^\degree\). Therefore, each cell either leaves the corresponding signal component unchanged or multiplies it by \(-1\). Thus, the phase configuration of the RIS is described by a vector:
\begin{equation}
    \mathbf{x} \in \{+1, -1\}^N.
\end{equation}

Each transmitter emits a known signal with power \(P\) at carrier frequency \(f_c\), and the signals transmitted by different transmitters are orthogonal. We assume a normalized scalar free-space propagation model between all elements of the system, such that the channel between any two points \(\boldsymbol{\psi}_1\) and \(\boldsymbol{\psi}_2\) is described by:
\begin{equation}
\mathcal{C}(\boldsymbol{\psi}_1, \boldsymbol{\psi}_2) = \frac{e^{-j\frac{2\pi}{\lambda_c}\|\boldsymbol{\psi}_1 - \boldsymbol{\psi}_2\|}}{\|\boldsymbol{\psi}_1 - \boldsymbol{\psi}_2\|}, \label{eq:channel_model}
\end{equation}
where \(\lambda_c\) is the wavelength corresponding to \(f_c\).


Then, by considering the Tx \(\to\) RIS, RIS \(\to\) Rx, and direct Tx \(\to\) Rx propagation paths, the corresponding channel vectors between the transmitters, the receivers, and the \(n\)-th RIS cell are defined as:
\begin{align}
    \mathbf{h}_{n}^\text{T} &= \left(
    \mathcal{C}\!\left(\boldsymbol{\psi}_{1}^\text{T}, \boldsymbol{\psi}_{n}^\text{RIS}\right),
    \mathcal{C}\!\left(\boldsymbol{\psi}_{2}^\text{T}, \boldsymbol{\psi}_{n}^\text{RIS}\right),
    \dots,
    \mathcal{C}\!\left(\boldsymbol{\psi}_{N_\text{T}}^\text{T}, \boldsymbol{\psi}_{n}^\text{RIS}\right)
    \right)^\top, \label{eq:ch1}\\
    \mathbf{h}_{n}^\text{R} &= \left(
    \mathcal{C}\!\left(\boldsymbol{\psi}_{n}^\text{RIS}, \boldsymbol{\psi}_{1}^\text{R}\right),
    \mathcal{C}\!\left(\boldsymbol{\psi}_{n}^\text{RIS}, \boldsymbol{\psi}_{2}^\text{R}\right), 
    \dots, 
    \mathcal{C}\!\left(\boldsymbol{\psi}_{n}^\text{RIS}, \boldsymbol{\psi}_{N_\text{R}}^\text{R}\right)
    \right)^\top, \label{eq:ch2}\\
   \mathbf{h}^{\text{TR}}&= \left(\mathcal{C}\!\left(\boldsymbol{\psi}^\text{T}_1,\boldsymbol{\psi}^\text{R}_1\right),\mathcal{C}\!\left(\boldsymbol{\psi}^\text{T}_1,\boldsymbol{\psi}^\text{R}_2\right),\ldots, \mathcal{C}\!\left(\boldsymbol{\psi}^\text{T}_{N_\text{T}},\boldsymbol{\psi}^\text{R}_{N_\text{R}}\right)\right)^\top ,\label{eq:ch3}
\end{align}
with \(\mathbf{h}_{n}^\text{T} \in \mathbb{C}^{N_\text{T}}\), \(\mathbf{h}_{n}^\text{R} \in \mathbb{C}^{N_\text{R}}\), and \(\mathbf{h}^{\text{TR}} \in \mathbb{C}^{N_{\text{T}} N_{\text{R}}}\).

Finally, we collect the received Channel State Information (CSI) observations in the vector \(\mathbf{r} \in \mathbb{C}^{N_\text{T}N_\text{R}}\), given by:
\begin{equation}
\mathbf{r} =  \mathbf{h}^\text{TR} + \sum_{n=1}^N x_n \mathbf{h}^\text{T}_n \otimes \mathbf{h}^\text{R}_n + \mathbf{w}, \label{eq:r0}
\end{equation}
where \(\otimes\) denotes the Kronecker product, \(x_n\) is the \(n\)-th component of the phase configuration vector \(\mathbf{x}\), and \(\mathbf{w} \in \mathbb{C}^{N_\text{T}N_\text{R}}\) is Additive White Gaussian Noise (AWGN) with independent and identically distributed entries drawn from \(\mathcal{CN}(0, \sigma^2)\). The transmit Signal-to-Noise Ratio (SNR) is defined as:
\begin{equation}
\gamma = \frac{P}{\sigma^2}.
\end{equation}

The simplified propagation model adopted in Eq.~\eqref{eq:channel_model} provides a controlled and computationally tractable setting in which the proposed binary optimization framework can be isolated and evaluated. The model captures the dominant phase and distance-dependent attenuation effects of the considered propagation paths, but does not account for all phenomena that may arise in complex indoor environments, such as diffuse scattering, multi-bounce reflections, rapidly varying blockages, mutual coupling among RIS elements, or polarization-dependent responses. Therefore, the simulations presented in this work are intended to evaluate the proposed optimization principle under a structured channel model, rather than to provide a full-wave characterization of a specific deployment environment.

\section{Proposed Optimization Framework}\label{sec:optimization}

The RIS can modify the propagation channel and support multiple functionalities, but determining the corresponding phase configuration typically leads to challenging optimization problems. In this work, we focus on a class of communication and sensing functionalities that can be formulated through LOS signal-strength maximization. Our objective is not merely to solve the resulting binary optimization problem, but to reveal the geometric structure satisfied by its optimal solutions and exploit this structure algorithmically. Initially, we derive a general mathematical formulation of the underlying problem. We then proceed with the development of optimization methods for the MIMO and SISO settings.

\subsection{Optimization Problem Formulation}\label{}

Consider the problem of maximizing the SNR at the receiver over the phase configuration of the RIS. Since the AWGN term is independent of the RIS configuration, maximizing the receiver SNR is equivalent to maximizing the norm of the noise-free CSI vector \(\mathbf{r}\), as described by Eq.~\eqref{eq:r0}. The following sequence of simplifications yields a more compact formulation of the problem, which reveals its underlying mathematical structure:

\begin{enumerate}
    \item Express each complex number \(z \in \mathbb{C}\) as a pair \((x,y) \in \mathbb{R}^2\), and interpret each vector of \(\mathbb{C}^{N_\text{T}N_\text{R}}\) as an element of \(\mathbb{R}^{2N_\text{T}N_\text{R}}\). The problem becomes:
    \begin{equation}
        \mathbf{x}^* \in \underset{\mathbf{x} \in \{+1,-1\}^N}{\mathbf{\argmax}} \left\|\mathbf{c} + \sum_{n=1}^N x_n \mathbf{a}_n \right\|,
    \end{equation}
    where \(\mathbf{c}\) and \(\mathbf{a}_n\) are elements of \(\mathbb{R}^{2N_\text{T}N_\text{R}}\), and correspond to 
    \(\mathbf{h}^{\text{TR}}\) and \(\mathbf{h}^\text{T}_n \otimes \mathbf{h}^\text{R}_n\), respectively.

    \item Integrate the constant vector \(\mathbf{c}\) inside the sum, by introducing a fresh variable \(x_{N+1} \in \{+1,-1\}\) and setting \(\mathbf{a}_{N+1} = \mathbf{c}\). This results in a homogeneous form of the problem:
    \begin{equation}\label{eq:problem0}
    \mathbf{x}^* \in \underset{\mathbf{x} \in \{+1,-1\}^{N+1}}{\mathbf{\argmax}} \left\|\sum_{n} x_n \mathbf{a}_n \right\| ,
    \end{equation}
    where an additional constraint \(x_{N+1} = 1\) is needed to ensure that the two objective functions are equal. Notice that the constraint \(x_{N+1} = 1\) does not add any extra complexity. For any solution \(\mathbf{x}^*\) of the unconstrained problem described in Eq.~\eqref{eq:problem0}, \(-\mathbf{x}^*\) is also a solution, as the negative sign is absorbed by the norm. Therefore, we can solve the unconstrained problem, and simply change the signs of all variables \(x_n\) afterwards, should the constraint \(x_{N+1} = 1\) be violated.

    \item Assume that all vectors \(\mathbf{a}_n\) are non zero, since any \(\mathbf{a}_n = \mathbf{0}\) does not contribute to the sum and can be disregarded. This step reduces the size of the problem, and also achieves a technical requirement needed for the analysis. By ignoring the zero vectors, we arrive at the final form of the optimization problem:
    \begin{equation}\label{eq:problem}
    \mathbf{x}^* \in \underset{\mathbf{x} \in \{+1,-1\}^{K}}{\mathbf{\argmax}} \left\|\sum_{k} x_k \mathbf{a}_k \right\|,
    \end{equation}
    where \(K \leq N+1\) is the number of nonzero vectors in the set \(\{\mathbf{a}_1, \mathbf{a}_2, \dots, \mathbf{a}_N, \mathbf{c}\}\) and \(\mathbf{0} \neq \mathbf{a}_k \in \mathbb{R}^d\), \(d = 2N_\text{T}N_\text{R}\). Here, for notational convenience, we have implicitly re-indexed the vectors so that the non-zero \(\mathbf{a}_k\) are exactly the first \(K\) vectors in the sequence \(\mathbf{a}_1, \mathbf{a}_2, \dots , \mathbf{a}_{N+1}\).
\end{enumerate}

\subsection{Informed Sampling Algorithm}\label{sec:MIMO}

The problem described by Eq.~\eqref{eq:problem} is an NP-hard combinatorial optimization problem, implying that exact tractable solutions are unlikely to exist in the general case, while related quadratic programming formulations and polynomially solvable special cases are presented in \cite{LiSunGuGaoLiu}. For the general problem considered here, a simple approximation approach could sample \(N_{\text{sample}}\) configurations from \(\{+1,-1\}^K\) and return the one achieving the highest norm. However, such an approach treats the binary configuration space as unstructured. The key observation of this work is that every globally optimal solution to Eq.~\eqref{eq:problem} must satisfy a specific geometric condition. This condition identifies a restricted class of configurations that contains all potentially optimal solutions. We exploit this structure through geometry-informed sampling in the general MIMO case, while in the SISO case the corresponding candidate set can be characterized explicitly, leading to an exact polynomial-time solution.

In this section, we exploit the geometric structure of the solutions to Eq.~\eqref{eq:problem}. Rather than sampling arbitrary configurations from \(\{+1,-1\}^K\), the proposed method restricts the search to a subclass of configurations that can contain globally optimal solutions.

At an intuitive level, this structure corresponds to aligning all vectors \(\mathbf{a}_k\) with a common direction in \(\mathbb{R}^d\). More precisely, if a configuration \(\mathbf{x}^*\) is a solution to the problem, then there must exist a vector \(\mathbf{v} \in \mathbb{R}^d\) such that, for all \(k\), \(x_k \mathbf{a}_k\) lies in the half space defined by \(\mathbf{v}\). This means that if \(\mathbf{a}_k\) and \(\mathbf{v}\) form an acute angle, then \(x_k = +1\), and if the angle is obtuse, then \(x_k = -1\), flipping \(\mathbf{a}_k\) and ``aligning" it to \(\mathbf{v}\). This is formally expressed by Lemma~\ref{lem:essential_x}.

\begin{lemma}\label{lem:essential_x}
    Let \(\mathbf{x}^* \in {\mathbf{\argmax}}_{\mathbf{x} \in \{+1,-1\}^{K}} \left\| \sum_k x_k \mathbf{a}_k\right\|\) with \(\mathbf{a}_k \neq \mathbf{0}\) for all \(k\). Then there exists some \(\mathbf{v} \in \mathbb{R}^d\) with \(\mathbf{a}_k \cdot \mathbf{v} \neq 0\) and \(x_k^* = \sgn(\mathbf{a}_k \cdot \mathbf{v})\) for all \(k\).
\end{lemma}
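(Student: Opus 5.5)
The natural choice is \(\mathbf{v} = \mathbf{s}^* := \sum_k x_k^* \mathbf{a}_k\), the optimal signed sum itself. We show that optimality forces every term \(x_k^*\mathbf{a}_k\) to have strictly positive inner product with \(\mathbf{s}^*\). That gives both \(\mathbf{a}_k\cdot\mathbf{v}\neq 0\) and \(x_k^* = \sgn(\mathbf{a}_k\cdot\mathbf{v})\) at once.

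The key step is a single-coordinate flip argument. Fix an index \(k\) and let \(\mathbf{x}'\) be \(\mathbf{x}^*\) with the \(k\)-th entry negated. The corresponding sum is \(\mathbf{s}' = \mathbf{s}^* - 2x_k^*\mathbf{a}_k\). Expanding the squared norm gives
\begin{equation*}
\|\mathbf{s}'\|^2 = \|\mathbf{s}^*\|^2 - 4x_k^*(\mathbf{a}_k\cdot\mathbf{s}^*) + 4\|\mathbf{a}_k\|^2 .
\end{equation*}
By optimality of \(\mathbf{x}^*\), \(\|\mathbf{s}'\|^2 \le \|\mathbf{s}^*\|^2\), so \(x_k^*(\mathbf{a}_k\cdot\mathbf{s}^*) \ge \|\mathbf{a}_k\|^2\). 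Since \(\mathbf{a}_k\neq\mathbf{0}\), this yields \(x_k^*(\mathbf{a}_k\cdot\mathbf{s}^*) > 0\). Hence \(\mathbf{a}_k\cdot\mathbf{s}^*\neq 0\), and its sign equals \(x_k^*\) because \(x_k^*\in\{\pm1\}\).

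The only point needing care is strictness: we must rule out \(\mathbf{a}_k\cdot\mathbf{v}=0\). This is exactly where the hypothesis \(\mathbf{a}_k\neq\mathbf{0}\), arranged in the reformulation leading to Eq.~\eqref{eq:problem}, is used. The \(+4\|\mathbf{a}_k\|^2\) term gives a strictly positive margin rather than merely a nonnegative one.

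There is no degenerate case to handle separately. If \(\mathbf{s}^*=\mathbf{0}\), the inequality would force \(0\ge\|\mathbf{a}_k\|^2>0\), a contradiction, so \(\mathbf{s}^*\neq\mathbf{0}\) automatically whenever \(K\ge1\). The argument is therefore short, and I expect no genuine obstacle beyond writing this expansion cleanly.
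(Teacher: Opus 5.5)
Your proof is correct and follows the paper's argument: the paper also takes \(\mathbf{v}=\sum_k x_k^*\mathbf{a}_k\), flips a single coordinate, and uses the \(4\|\mathbf{a}_i\|^2>0\) term to obtain a strict improvement. The only difference is that the paper argues by contradiction, whereas you derive the direct bound \(x_k^*(\mathbf{a}_k\cdot\mathbf{s}^*)\ge\|\mathbf{a}_k\|^2\), which is a slightly sharper form of the same conclusion.
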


\begin{proof}
    Let \(\mathbf{x}^*\in {\mathbf{\argmax}}_{\mathbf{x} \in \{+1,-1\}^{K}} \left\| \sum_k x_k \cdot \mathbf{a}_k\right\|\) and set \(\mathbf{v}^* =  \sum_{k} x_k^* \mathbf{a}_k\) as the corresponding vector with maximum norm. We show that \(\mathbf{v}^*\) satisfies the requirements of the Lemma. Suppose, for the sake of contradiction, that for some \(i\), either \(\mathbf{a}_i \cdot \mathbf{v}^* = 0\) or \(x_i^* \neq \sgn( \mathbf{a}_i \cdot \mathbf{v}^* )\). Define \(\mathbf{x}'\) by \(x_i' = - x_i^*\) and \(x_k' = x_k^*\) for \(k \neq i\). Then:
    \begin{equation}
        \mathbf{v}' =  \sum_{k=1}^K x_k' \mathbf{a}_k = \mathbf{v}^* - 2 x_i^*  \mathbf{a}_i .
    \end{equation}
    The norm of \(\mathbf{v}'\) is described by:
    \begin{align}
        \|\mathbf{v}'\|^2  &= \|\mathbf{v}^*\|^2 + 4\|\mathbf{a}_i\|^2 -4x_i^* \mathbf{a}_i \cdot \mathbf{v}^* \label{eq:ineq1}\\
        &> \|\mathbf{v}^*\|^2 -4x_i^* \mathbf{a}_i \cdot \mathbf{v}^* \label{eq:ineq2} .
    \end{align}
    
    If \(\mathbf{a}_i \cdot \mathbf{v}^* = 0\) then we immediately get \(\|\mathbf{v}'\| > \|\mathbf{v}^*\|\), contradicting the maximality of \(\|\mathbf{v}^*\|\). On the other hand, if \(x_i^* \neq \sgn(\mathbf{a}_i \cdot \mathbf{v}^*)\), then \(x_i^* \mathbf{a}_i \cdot \mathbf{v}^* < 0\) and from \eqref{eq:ineq2} we again get \(\|\mathbf{v}'\|> \|\mathbf{v}^*\|\). Both cases lead to a contradiction, meaning that we must indeed have \(\mathbf{a}_k \cdot \mathbf{v}^* \neq 0\) and \(x_k^* = \sgn(\mathbf{a}_k \cdot \mathbf{v}^*)\), for all \(k\).
    
\end{proof}

Lemma~\ref{lem:essential_x} provides a direct mechanism for generating structurally admissible configurations. We first sample a vector \(\mathbf{v} \in \mathbb{R}^d\) from a circularly symmetric distribution and then construct the corresponding configuration as \(x_k = \sgn(\mathbf{a}_k \cdot \mathbf{v})\), \(k \in \{1, 2, \dots , K\}\). In this way, every sampled configuration satisfies the geometric condition of Lemma~\ref{lem:essential_x}, while configurations that cannot be globally optimal are excluded from the search. The procedure is presented in Algorithm~\ref{alg:better}.

\begin{algorithm}[ht]
    \caption{Informed Sampling Approach}\label{alg:better}
    \hspace*{\algorithmicindent} \textbf{Input:} \(\mathbf{a}_1, \mathbf{a}_2, \dots, \mathbf{a}_K \in \mathbb{R}^d - \{\mathbf{0}\}, \ N_{\text{samples}} \in \mathbb{N} - \{0\}\) \\
    \hspace*{\algorithmicindent} \textbf{Output:} \(\mathbf{x} \in \{+1,-1\}^K \text{ maximizing } \left\| \sum_k x_k \mathbf{a}_k\right\| \) 
    \begin{algorithmic}[1]
    
    \State \(d_{\max} \gets 0\)
    \For{\(i \in \{1,2, \dots, N_{\text{samples}}\}\)}
        \State \textbf{sample} \(\mathbf{v}\) \textbf{from} \(\mathbb{R}^d\)
        \State \(\mathbf{x} \gets (\sgn(\mathbf{a}_1 \cdot \mathbf{v}), \sgn(\mathbf{a}_2 \cdot \mathbf{v}), \dots, \sgn(\mathbf{a}_k \cdot \mathbf{v}))\)
        \If{\(\left\| \sum_k x_k  \mathbf{a}_k \right\| > d_{\max}\)}
            \State \(d_{\max} \gets \left\| \sum_k x_k  \mathbf{a}_k \right\|\)
            \State \(\mathbf{x}^* \gets \mathbf{x}\)
        \EndIf
    \EndFor
    \State \Return \(\mathbf{x}^*\)
    
    \end{algorithmic}
\end{algorithm}

By restricting the search to configurations satisfying Lemma~\ref{lem:essential_x}, the proposed method uses the available sampling budget exclusively on structurally admissible candidates, in contrast to direct sampling over the entire \(\{+1,-1\}^K\) space.

\subsection{Exact Polynomial-Time Solution for SISO Systems}\label{sec:siso}

The geometric characterization of Lemma~\ref{lem:essential_x} also reveals how the structure of the problem changes with the signal-space dimension. In the general MIMO case, the admissible configurations are induced by directions in a high-dimensional space and are therefore explored through sampling. In the SISO case, however, the vectors \(\mathbf{a}_k\) lie in \(\mathbb{R}^2\), allowing the complete set of geometrically admissible configurations to be characterized explicitly. As a result, the randomized strategy of Algorithm~\ref{alg:better} can be replaced by an exact polynomial-time algorithm. Thus, the SISO solution is not based on a separate optimization principle, but emerges as the low-dimensional specialization of the same geometric structure used for the general MIMO problem.

More specifically, Lemma~\ref{lem:essential_x} implies that every optimal configuration \(\mathbf{x}^* \in \{+1, -1\}^K\) obeys the rule \(x_k = \sgn(\mathbf{a}_k \cdot \mathbf{v})\) for some \(\mathbf{v} \in \mathbb{R}^2\). The sign of any of these dot products only changes when \(\mathbf{v}\) crosses the line orthogonal to \(\mathbf{a}_k\), meaning that there is a one-to-one correspondence between the induced configurations \(\mathbf{x}\) and the sectors of \(\mathbb{R}^2\) defined by the lines \(l_k\) = \(\{\mathbf{p} | \mathbf{a}_k \cdot \mathbf{p} = 0\}\), \(k \in \{1,2,\ldots,K\}\). It is easy to show that the number of sectors, and consequently the number of configurations to be examined, is at most \(2K\) (exactly \(2K\) when all vectors \(\mathbf{a}_k\) are pairwise linearly independent, and less than \(2K\) otherwise). 
This can be evidenced in Fig.~\ref{fig:sectors} in the simple case of \(K=3\), with the generalization to any \(K\) being straightforward. 
Finally, we can determine the \(2K\) possible \(\mathbf{x}\) configurations by inspecting whether a vector \(\mathbf{v}\) lying inside each sector belongs to the half-plane defined by the vector \(\mathbf{a}_k\) and the line \(l_k\). If it belongs then \(x_k=+1\), otherwise \(x_k=-1\). This simple and geometry-assisted argument is visualized in Fig.~\ref{fig:sectors}, which shows the possible \(\mathbf{x}\) configurations. 

What is more, we can further reduce the number of configurations to be evaluated. It can be observed in Fig.~\ref{fig:sectors} that configurations corresponding to antipodal sectors are negatives of one another. However, these configurations are equivalent since the opposite sign is absorbed by the norm, and thus they yield the same objective value. Therefore, we only need to examine at most \(K\) configurations.

\begin{figure}[ht]
\centering
\begin{tikzpicture}[scale=1.9]


\def \thetaa{35}
\def \thetab{110}
\def \thetac{150 + 180}

\def \r{1.8}

\def \L{.5}

\draw[myblue, thick] ({-\r*sin(\thetaa)}, {\r*cos(\thetaa)}) -- ({\r*sin(\thetaa)}, {-\r*cos(\thetaa)}) node[pos=1, anchor=west, text=black] {\(l_1\)};
\draw[myorange, thick] ({-\r*sin(\thetab)}, {\r*cos(\thetab)}) -- ({\r*sin(\thetab)}, {-\r*cos(\thetab)}) node[pos=1, anchor=west, text=black] {\(l_2\)};
\draw[mygreen, thick] ({-\r*sin(\thetac)}, {\r*cos(\thetac)}) -- ({\r*sin(\thetac)}, {-\r*cos(\thetac)}) node[pos=1, anchor=west, text=black] {\(l_3\)};

\draw[->, myblue, thick] ({-sin(\thetaa)*(0.8*\r+0.2)}, {cos(\thetaa)*(0.8*\r+0.2)}) -- ({-sin(\thetaa)*(0.8*\r+0.2) - \L*cos(\thetaa)},{cos(\thetaa)*(0.8*\r+0.2) - \L*sin(\thetaa)}) node[pos=1, anchor=east, text=black] {\(\mathbf{a}_1\)};
\draw[->, myorange, thick] ({-sin(\thetab)*(0.8*\r+0.2)}, {cos(\thetab)*(0.8*\r+0.2)}) -- ({-sin(\thetab)*(0.8*\r+0.2) + \L*cos(\thetab)},{cos(\thetab)*(0.8*\r+0.2) + \L*sin(\thetab)}) node[pos=1, anchor=west, text=black] {\(\mathbf{a}_2\)};
\draw[->, mygreen, thick] ({-sin(\thetac)*(0.8*\r+0.2)}, {cos(\thetac)*(0.8*\r+0.2)}) -- ({-sin(\thetac)*(0.8*\r+0.2) + \L*cos(\thetac)},{cos(\thetac)*(0.8*\r+0.2) + \L*sin(\thetac)}) node[pos=1, anchor=west, text=black] {\(\mathbf{a}_3\)};

\node at ({(0.5*\r + 0.55)*cos((\thetaa+\thetab)/2+90)}, {(0.5*\r + 0.5)*sin((\thetaa+\thetab)/2+90)}) {\scriptsize{\(\mathbf{x} = (\mathbf{\tblue{+1},\torange{+1},\tgreen{-1}})\)}};
\node at ({(0.5*\r + 0.55)*cos((\thetaa+\thetab)/2-90)}, {(0.5*\r + 0.5)*sin((\thetaa+\thetab)/2-90)}) {\scriptsize{\(\mathbf{x} = (\mathbf{\tblue{-1},\torange{-1},\tgreen{+1}})\)}};
    
\node at ({(0.5*\r + 0.45)*cos((\thetab+\thetac+180)/2+90) + 0.15}, {(0.5*\r + 0.45)*sin((\thetab+\thetac+180)/2+90)}) {\footnotesize{\(\mathbf{x} = (\mathbf{\tblue{-1},\torange{+1},\tgreen{+1}})\)}};
\node at ({(0.5*\r + 0.45)*cos((\thetab+\thetac+180)/2-90) - 0.2}, {(0.5*\r + 0.45)*sin((\thetab+\thetac+180)/2-90)}) {\footnotesize{\(\mathbf{x} = (\mathbf{\tblue{+1},\torange{-1},\tgreen{-1}})\)}};
    
\node at ({(0.5*\r + 0.25)*cos((\thetaa+\thetac)/2+90) +0.06}, {(0.5*\r + 0.25)*sin((\thetaa+\thetac)/2+90)}) {\scriptsize{\(\mathbf{x} = (\mathbf{\tblue{+1},\torange{-1},\tgreen{+1}})\)}};
\node at ({(0.5*\r + 0.25)*cos((\thetaa+\thetac)/2-90)}, {(0.5*\r + 0.25)*sin((\thetaa+\thetac)/2-90)}) {\scriptsize{\(\mathbf{x} = (\mathbf{ \tblue{-1},\torange{+1},\tgreen{-1} })\)}};

\fill (0,0) circle (0.03);

\draw[dash pattern=on 2pt off 3pt] (-2.2,0) -- (2.2,0);
\draw[dash pattern=on 2pt off 3pt] (0,-1.85) -- (0,1.85);

\end{tikzpicture}

\caption{Sectors and corresponding \(\mathbf{x}\) configurations for \(K = 3\), where each component \(x_k\) obeys the rule \(x_k = \sgn(\mathbf{a}_k \cdot \mathbf{v})\), with \(\mathbf{v}\) a vector in the appropriate sector. Each line \(l_k\), \(k = 1,2,3\), is orthogonal to the vector \(\mathbf{a}_k\).}
\label{fig:sectors}

\end{figure}

The complete optimization procedure is summarized in Algorithm~\ref{alg:plane}. We begin by constructing a list \(\Theta\), which contains the polar angles defining the boundaries of all sectors. Initially, \(\Theta\) contains exactly \(2K\) elements, corresponding to two angles for each vector \(\mathbf{a}_k\). Specifically, for each vector, the associated angles are those orthogonal to its direction and are given by \(\arg(\mathbf{a}_k)+\pi/2\) and \(\arg(\mathbf{a}_k)-\pi/2\), where \(\arg(\cdot)\) denotes the polar angle of a vector in \(\mathbb{R}^2\).

If all vectors \(\mathbf{a}_k\) have distinct orientations, then all \(2K\) elements in \(\Theta\) are unique. However, when there exist parallel vectors, some of the corresponding angles coincide, resulting in duplicate entries. Therefore, we remove all duplicate angles from \(\Theta\), yielding the final list of unique angles that determine the sector boundaries. We then perform a sorting of the elements of \(\Theta\), which results in each pair of consecutive entries \((\Theta_{j}, \Theta_{j+1})\) defining a sector boundary. As discussed previously, it is sufficient to examine only half of the resulting sectors. Therefore, we iterate over the indices corresponding to the first half of \(\Theta\), and for each index \(j\), we construct the \(\mathbf{x}\) configuration associated with the sector bounded by \((\Theta_j, \Theta_{j+1})\). A representative point within this sector is selected as the unit vector with polar angle \(\frac{\Theta_j+\Theta_{j+1}}{2}\):
\begin{equation}
\mathbf{v}_j = \left(\cos\left(\frac{\Theta_j + \Theta_{j+1}}{2}\right), \sin\left(\frac{\Theta_j + \Theta_{j+1}}{2}\right)\right).
\end{equation}
Then, the corresponding \(\mathbf{x}\) configuration is given by:
\begin{equation}
\mathbf{x}_j = (\sgn(\mathbf{a}_1 \cdot \mathbf{v}_j), \sgn(\mathbf{a}_2 \cdot \mathbf{v}_j), \dots, \sgn(\mathbf{a}_K \cdot \mathbf{v}_j)).
\end{equation}
The algorithm evaluates all \(\mathbf{x}\) configurations generated from the first half of the sectors, and returns the one that maximizes the objective value.

\begin{algorithm}[ht]
    \caption{Fast Exact Algorithm for \(d=2\)}\label{alg:plane}
    \hspace*{\algorithmicindent} \textbf{Input:} \(\mathbf{a}_1, \mathbf{a}_2, \dots, \mathbf{a}_K \in \mathbb{R}^2 - \{\mathbf{0}\}\) \\
    \hspace*{\algorithmicindent} \textbf{Output:} \(\mathbf{x} \in \{+1,-1\}^K \text{ maximizing } \left\|\sum_k x_k \mathbf{a}_k\right\| \) 
    \begin{algorithmic}[1]
    
    \State \(\Theta \gets \langle \rangle\) \Comment{Initialize empty list of angles}
    \For{\(k \in \{1, 2, \dots, K\}\)}
    \State \text{append} \(\arg(\mathbf{a}_k) + \frac{\pi}{2}  \pmod{\pi}\) \text{to} \(\Theta\)
    \State \text{append} \(\arg(\mathbf{a}_k) - \frac{\pi}{2}  \pmod{\pi}\) \text{to} \(\Theta\)
    \EndFor
    \State \text{sort} \(\Theta\) \Comment{\(\Theta\) has exactly \(2K\) elements}
    \State \text{remove duplicates from} \(\Theta\)  \Comment{\(\Theta\) has at most \(2K\) elements}
    \State \(d_{\max} \gets 0\)
    \For{\(j \in \{1,2, \dots, \textsc{Length}(\Theta)/2\}\)}
        \State \(\theta \gets \frac{\Theta_j + \Theta_{j+1}}{2}\)
        \State \(v \gets (\cos\theta, \sin\theta)\)
        \State \(\mathbf{x} \gets (\sgn(\mathbf{a}_1 \cdot \mathbf{v}), \sgn(\mathbf{a}_2 \cdot \mathbf{v}), \dots, \sgn(\mathbf{a}_k \cdot \mathbf{v}))\)
        \If{\(\left\| \sum_k x_k \mathbf{a}_k  \right\| > d_{\max}\)}
            \State \(d_{\max} \gets \left\| \sum_k x_k \mathbf{a}_k  \right\|\)
            \State \(\mathbf{x}^* \gets \mathbf{x}\)
        \EndIf
    \EndFor
    \State \Return \(\mathbf{x}^*\)
    
    \end{algorithmic}
\end{algorithm}

The computationally most expensive part of the algorithm is calculating all norms in lines \(13\) and \(14\), since we would need a loop repeating \(K\) times for calculating a single norm, and a total of \(\textsc{Length}(\Theta)/2 = \mathcal{O}(K)\) such calculations must be performed. This brings the time complexity of the algorithm to \(\mathcal{O}(K^2)\). If we assume that each norm, and each sign vector \(\mathbf{x}\) in line 12, can be computed in constant time by employing parallelism, then the loop in lines \(9-17\) can be executed in linear time, making the sorting of the list \(\Theta\) the dominant computational cost. Hence the time complexity reduces to \(\mathcal{O}(K\log K)\).

\section{Application to ISAC}\label{sec:isac}

Section~\ref{sec:optimization} focused on optimizing the 1-bit RIS phases in order to facilitate a pure communication task. In this section, we extend the framework to an ISAC scenario, where the RIS simultaneously supports both communication and sensing functionalities. This is achieved by partitioning the RIS elements into two sets: one dedicated to communication and one dedicated to sensing. Importantly, the RIS phase optimization problem arising in the sensing setup has the same mathematical form as Eq.~\eqref{eq:problem}, allowing the algorithms developed in Section~\ref{sec:optimization} to be directly applied in this setting as well.

We extend the system model with some additional components corresponding to the sensing functionality. We define a sensing region represented by the three-dimensional box:
\begin{equation}
    A = [x_{\min}, x_{\max}] \times [y_{\min}, y_{\max}] \times [z_{\min}, z_{\max}] .
\end{equation}
A single point target is assumed to lie within the sensing region, and the objective of the sensing procedure is to estimate its position. The true target location is denoted by:
\begin{equation}
    \boldsymbol{\psi}_{t}^{\text{real}} = \left(x_{t}^{\text{real}}, y_{t}^{\text{real}}, z_{t}^{\text{real}}\right) .
\end{equation}
Similarly to before, we suppose that the only signal propagation paths are Tx \(\to\) RIS, RIS \(\to\) Rx, Tx \(\to\) Rx, Tx \(\to\) target, RIS \(\to\) target, and target \(\to\) Rx, as shown in Fig.~\ref{fig:paths}. To this end, in addition to the channel vectors of Eqs.~\eqref{eq:ch1}-\eqref{eq:ch3} we consider:
\begin{align}
    \mathbf{h}^\text{T}_{t,\text{real}}  &= \left(
    \mathcal{C}\!\left(\boldsymbol{\psi}^\text{T}_1,\boldsymbol{\psi}_{t}^{\text{real}}\right),
    \mathcal{C}\!\left(\boldsymbol{\psi}^\text{T}_2,\boldsymbol{\psi}_{t}^{\text{real}}\right),\dots,
    \mathcal{C}\!\left(\boldsymbol{\psi}^\text{T}_{N_\text{T}},\boldsymbol{\psi}_{t}^{\text{real}}\right)
    \right)^\top  , \\
    \mathbf{h}^\text{R}_{t,\text{real}}  &= \left(
    \mathcal{C}\!\left(\boldsymbol{\psi}_{t}^{\text{real}},\boldsymbol{\psi}^\text{R}_1\right),\mathcal{C}\!\left(\boldsymbol{\psi}_{t}^{\text{real}},\boldsymbol{\psi}^\text{R}_2\right),
    \dots,
    \mathcal{C}\!\left(\boldsymbol{\psi}_{t}^{\text{real}},\boldsymbol{\psi}^\text{R}_{N_\text{R}}\right)
    \right)^\top ,
\end{align}
with \(\mathbf{h}^\text{T}_{t,\text{real}} \in \mathbb{C}^{N_\text{T}}\) and \(\mathbf{h}^\text{R}_{t,\text{real}} \in \mathbb{C}^{N_\text{R}}\).

\begin{figure}[ht]
\centering

\begin{circuitikz}[scale = 0.75]
\tikzstyle{every node}=[font=\LARGE]
\draw (2.5,7) to (2.5,7.5) node[dinantenna, rotate=-360]{};
\draw (12.5,7) to (12.5,7.5) node[dinantenna, rotate=-360]{};
\draw [line width=0.7pt, short] (3.75,12) -- (5,12);
\draw [ line width=0.7pt ] (7.5,9) circle (0.25cm);
\draw [line width=0.7pt, ->, >=Stealth, dashed] (2.6,8.8) -- (4.2,11.75); 
\draw [line width=0.7pt, ->, >=Stealth, dashed] (4.4,11.75) -- (7.25,9.25); 
\draw [line width=0.7pt, ->, >=Stealth, dashed] (3,8.35) -- (7.1,9); 
\draw [line width=0.7pt, ->, >=Stealth, dashed] (8,8.8) -- (12,8.3); 
\draw [line width=0.7pt, ->, >=Stealth, dashed] (4.6, 11.75) -- (12,8.6); 
\draw [line width=0.7pt, ->, >=Stealth, dashed] (3,8) -- (12,8); 
\node [font=\large] at (2.5,6.5) {Tx};
\node [font=\large] at (12.5,6.5) {Rx};
\node [font=\large] at (8.5,9.25) {target};
\node [font=\large] at (4.375,12.5) {RIS};
\node [font=\normalsize] at (2.75,10) {};
\node [font=\normalsize] at (6.25,10.5) {};
\node [font=\normalsize] at (4.75, 8) {};
\node [font=\normalsize] at (10.25,8) {};
\end{circuitikz}
\caption{Propagation paths considered in the channel model for the RIS-assisted ISAC scenario. The dashed arrows trace the Tx\(\to\)Rx, Tx\(\to\)RIS\(\to\)Rx, Tx\(\to\)target\(\to\)Rx, and Tx\(\to\)RIS\(\to\)target\(\to\)Rx links, accounting for the direct, RIS-assisted, and target-reflected components.}
\label{fig:paths}
\end{figure}

Then, the CSI vector can be expressed as the superposition of two components, one corresponding to paths that involve reflections from the target (Tx \(\to\) target \(\to\) Rx and Tx \(\to\) RIS \(\to\) target \(\to\) Rx), and another corresponding to paths that do not involve the target (Tx \(\to\) Rx and Tx \(\to\) RIS \(\to\) Rx). We refer to the first one as \(\mathbf{r}_{\text{sen}}\) and to the second as \(\mathbf{r}_\text{com}\). Therefore, the CSI vector is:\begin{align}
&\mathbf{r}_\text{sen} = \alpha\Big( \mathbf{h}^\text{T}_{t,\text{real}} \otimes \mathbf{h}^\text{R}_{t,\text{real}} \!+\!\! \sum_{n=1}^N\! x_n \mathcal{C}\!\left(\boldsymbol{\psi}^\text{RIS}_n, \boldsymbol{\psi}_{t,\text{real}}\right) \mathbf{h}^\text{T}_{n}
\otimes \mathbf{h}^\text{R}_{t,\text{real}} \Big)
\nonumber , \\
&\mathbf{r}_\text{com} = \mathbf{h}^\text{TR} + \sum_{n=1}^N x_n \mathbf{h}^\text{T}_n \otimes \mathbf{h}^\text{R}_n \label{eq:r} , \\
&\mathbf{r}_\text{total} = \mathbf{r}_{\text{sen}} + \mathbf{r}_\text{com} + \mathbf{w}\nonumber ,
\end{align}
where \(\alpha\) is an unknown scalar parameter that models the reflection coefficient of the target.

We partition the RIS cell indices \(\{1,2, \dots, N\}\) into two sets, \(S_1\) and \(S_2\). The cells indexed by \(S_1\) are allocated for optimizing communication performance, while those indexed by \(S_2\) are allocated for optimizing sensing. In the following sections, we describe the communication and the sensing scenarios.

\subsection{Communication Setup}\label{sec:comm_setup}

Since \(S_1\) corresponds to the cells that are assigned to optimizing the communication, we define the effective LOS signal by taking into account the true LOS and the reflections from the cells that correspond to \(S_1\):
\begin{equation}
\mathbf{r}_{\text{LOS}} = \mathbf{h}^\text{TR} + \sum_{n \in S_1} x_n \mathbf{h}^\text{T}_n \otimes \mathbf{h}^\text{R}_n .
\end{equation}
The communication optimization task is then identified as maximizing over \(x_n, n \in S_1\) the norm of \(\mathbf{r}_{\text{LOS}}\):
\begin{equation}
\mathbf{x}^*_\text{com} \in \underset{\substack{x_n \in \{+1,-1\} \\ n \in S_1}}{\mathbf{\argmax}} \|\mathbf{r}_\text{LOS}\| .
\end{equation}

This optimization problem has the same form as the problem defined in Eq.~\eqref{eq:problem}, and can therefore be solved using either Algorithm~\ref{alg:better} or Algorithm~\ref{alg:plane}, depending on the dimension of the CSI vector. Once an optimal configuration \(\mathbf{x}_{\text{com}}^* : S_1 \to \{+1, -1\}\) is identified, it remains fixed throughout the whole sensing procedure.

\subsection{Sensing Setup}\label{sec:sensing_setup}

After identifying the parameters \(x_n, n \in S_1\) of the RIS, it is now time to choose the phase shifts that will facilitate the sensing task. For this, we follow a technique similar to~\cite{BuzziGrossiLopsVenturino}. We assign to each possible target position \(\boldsymbol{\psi}_t \in A\) a real value \(\eta\) that quantifies the confidence that the target is located at \(\boldsymbol{\psi}_t\). Then, the estimated target position is the point that maximizes this function \(\eta\):
\begin{equation}
\boldsymbol{\psi}_{\text{estimated}} \in \underset{\boldsymbol{\psi}_t \in A}{\mathbf{\argmax}} \ \eta .
\end{equation}

As will become evident, the function \(\eta\) is computationally expensive to evaluate, and its gradients are unknown. Therefore, we rely on gradient-free optimization methods to solve the above problem, such as Bayesian optimization. These methods are combined with an initial coarse grid search to guide the optimization process.

The confidence \(\eta\) is computed by comparing the measured CSI values with the expected CSI values assuming the target were located at position \(\boldsymbol{\psi}_t\). To this end, for each position \(\boldsymbol{\psi}_t\), we define the vectors:
\begin{align}
    \mathbf{h}^\text{T}_{t}  &= \left(
    \mathcal{C}\!\left(\boldsymbol{\psi}^\text{T}_1,\boldsymbol{\psi}_{t}\right),
    \mathcal{C}\!\left(\boldsymbol{\psi}^\text{T}_2,\boldsymbol{\psi}_{t}\right),
    \dots,
    \mathcal{C}\!\left(\boldsymbol{\psi}^\text{T}_{N_\text{T}},\boldsymbol{\psi}_{t}\right)
    \right)^\top , \\
    \mathbf{h}^\text{R}_{t}  &= \left(
    \mathcal{C}\!\left(\boldsymbol{\psi}_{t},\boldsymbol{\psi}^\text{R}_1\right),
    \mathcal{C}\!\left(\boldsymbol{\psi}_{t},\boldsymbol{\psi}^\text{R}_2\right),
    \dots,
    \mathcal{C}\!\left(\boldsymbol{\psi}_{t},\boldsymbol{\psi}^\text{R}_{N_\text{R}}\right)
    \right)^\top ,
\end{align}
with \(\mathbf{h}^\text{T}_{t} \in \mathbb{C}^{N_\text{T}}\) and \(\mathbf{h}^\text{R}_{t}  \in \mathbb{C}^{N_\text{R}}\), which correspond to the channels between the possible target position and the transmitter and receiver arrays. The expected CSI is then given by:
\begin{equation}
\mathbf{r}_\text{exp} =  \mathbf{r}_{\text{com}} + \alpha\left( \mathbf{h}^\text{T}_{t} \otimes \mathbf{h}^\text{R}_{t} + \sum_{n=1}^N x_n \mathcal{C}\!\left(\boldsymbol{\psi}^\text{RIS}_n,\boldsymbol{\psi}_t \right) \mathbf{h}^\text{T}_{n} \otimes \mathbf{h}^\text{R}_{t} \right).
\end{equation}

Since we know all the parameters that define \(\mathbf{r}_\text{com}\), we can subtract it from \(\mathbf{r}_\text{total}\), and obtain \(\mathbf{r}_\text{sen} + \mathbf{w}\). We then compare \(\mathbf{r}_{\text{sen}} + \mathbf{w}\) with the expected signal that contains only reflection paths including the target:
\begin{equation}
\mathbf{r}_{\text{exp},\text{sen}} =  \mathbf{h}^\text{T}_{t} \otimes \mathbf{h}^\text{R}_{t} + \sum_{n=1}^N x_n \mathcal{C}\!\left(\boldsymbol{\psi}^\text{RIS}_n,\boldsymbol{\psi}_t \right) \mathbf{h}^\text{T}_{n} \otimes \mathbf{h}^\text{R}_{t}.
\end{equation}
The comparison is done by taking the real part of the dot product of \(\mathbf{H} = \mathbf{r}_{\text{exp},\text{sen}}/\|\mathbf{r}_{\text{exp},\text{sen}}\|\) and \(\mathbf{r}_\text{sen} + \mathbf{w}\), hence setting:
\begin{equation}
\eta = \Re\!\left(\mathbf{H}^\dagger (\mathbf{r}_\text{sen}+\mathbf{w})\right)
\end{equation}
as the confidence value for \(\boldsymbol{\psi}_t\), where \((\cdot)^\dagger\) denotes the Hermitian operator.

Therefore, we need to choose \(x_n\), \(n \in S_2\), for each \(\boldsymbol{\psi}_t\) separately so as to maximize the value that \(\eta\) would have if the target were truly at point \(\boldsymbol{\psi}_t\). Assuming the target is at \(\boldsymbol{\psi}_t\) and ignoring the AWGN we get:
\begin{equation}
\eta = \frac{\Re(\mathbf{r}_{\text{exp},\text{sen}}^\dagger \mathbf{r}_{\text{sen}})}{\|\mathbf{r}_{\text{exp},\text{sen}}\|} = \frac{\|\mathbf{r}_{\text{exp},\text{sen}}\|^2}{\|\mathbf{r}_{\text{exp},\text{sen}}\|} = \|\mathbf{r}_{\text{exp},\text{sen}}\| ,
\end{equation}
and we have to maximize over \(x_n\), \(n \in S_2\), the norm of \(\mathbf{r}_{\text{exp},\text{sen}}\). This corresponds to the optimization problem:
\begin{equation}
\mathbf{x}^*_\text{sen} \in \underset{\substack{x_n \in \{+1,-1\} \\ n \in S_2}}{\mathbf{\argmax}} \left\| \mathbf{h}^\text{T}_{t} \otimes \mathbf{h}^\text{R}_{t} + \sum_{n=1}^N x_n \mathcal{C}\!\left(\boldsymbol{\psi}^\text{RIS}_n,\boldsymbol{\psi}_t \right) \mathbf{h}^\text{T}_{n} \otimes \mathbf{h}^\text{R}_{t} \right\| .
\end{equation}

Since the values \(x_n\), \(n \in S_1\), have already been determined by the method described in Section~\ref{sec:comm_setup}, we rewrite the above problem as:
\begin{align}
\mathbf{x}^*_\text{sen} &\in \underset{\substack{x_n \in \{+1,-1\} \\ n \in S_2}}{\mathbf{\argmax}} \left\| \mathbf{c} + \sum_{n \in S_2} x_n \mathcal{C}\!\left(\boldsymbol{\psi}^\text{RIS}_n,\boldsymbol{\psi}_t \right) \mathbf{h}^\text{T}_{n} \otimes \mathbf{h}^\text{R}_{t} \right\| , \\
\mathbf{c} &= \mathbf{h}^\text{T}_{t} \otimes \mathbf{h}^\text{R}_{t} + \sum_{n \in S_1} x_n \mathcal{C}\!\left(\boldsymbol{\psi}^\text{RIS}_n,\boldsymbol{\psi}_t \right) \mathbf{h}^\text{T}_{n} \otimes \mathbf{h}^\text{R}_{t} ,
\end{align}
and solve it using again Algorithms~\ref{alg:better} and~\ref{alg:plane}.

\section{Simulation Results}\label{sec:results}

In this section, we evaluate the proposed framework from two complementary perspectives. First, we assess whether the geometric structure identified in Section~\ref{sec:optimization} enables more effective exploration of the binary configuration space under a fixed sampling budget. Second, we evaluate the applicability of the same optimization framework to the communication and sensing functionalities introduced in Section~\ref{sec:isac}. All simulations are conducted in a three-dimensional environment, where the system geometry, RIS configuration, and channel model follow the description in Section~\ref{sec:system_model}. The optimization procedures introduced in Section~\ref{sec:optimization} are used to determine the RIS phase configurations under the considered scenarios.

We consider two simulation settings. The first focuses exclusively on LOS maximization and evaluates the geometry-informed sampling method against uninformed binary sampling and an ideal continuous-phase benchmark. The second considers the full ISAC scenario of Section~\ref{sec:isac}, where the RIS elements are partitioned between communication and sensing to assess the corresponding performance trade-off. For all simulations, the RIS is placed at position \(\boldsymbol{\psi}^{\mathrm{RIS}} = (0,0,5\lambda_c)\), with surface normal \(\widehat{\mathbf{n}} = (5,0,-1)/\sqrt{26}\) and tangent vector \(\widehat{\mathbf{t}} = (0,1,0)\). The spacing between adjacent RIS elements is set to \(d_\text{cell} = \lambda_c/5\), and the transmit power is fixed at \(P = 1~\mathrm{W}\). Whenever a sampling-based algorithm is employed, the sampling budget is fixed to \(N_{\text{samples}} = 1000\).

\subsection{LOS Maximization Evaluation}\label{sec:results_LOS}

We first evaluate the communication setup described in Section~\ref{sec:optimization}. We consider an \(8 \times 8\) RIS. The system consists of a linear array of \(N_\text{T}\) transmitters centered at \((0,0,2.5\lambda_c)\) and oriented parallel to the \(y\)-axis, and a linear array of \(N_\text{R}\) receivers centered at \((5\lambda_c,0,2.5\lambda_c)\), also oriented parallel to the \(y\)-axis. Adjacent elements in both arrays are spaced by \(\lambda_c/4\).

We compare the informed sampling approach of Algorithm~\ref{alg:better} with two benchmarks. The first is direct sampling, in which candidate configurations are drawn uniformly from the binary space \(\{+1,-1\}^N\). The second is an ideal continuous-phase RIS optimized using the method of~\cite{BuzziGrossiLopsVenturino}. The informed and direct binary methods use the same sampling budget, allowing the effect of restricting the search to geometrically admissible configurations to be isolated directly. The performance metric is the received power \(\|\mathbf{r}_\text{total}\|^2\). The number of transmitters \(N_\text{T}\) varies in \(\{1,5,9\}\), while the number of receivers \(N_\text{R}\) takes values in \(\{1,2,\dots,10\}\). The results are shown in Fig.~\ref{fig:LOS_evaluation1}.

\begin{figure*}[t]
\centering

\begin{tikzpicture}
\begin{axis}[
    height=0.55\columnwidth,
    width=1.25\columnwidth,
    scale only axis,
    xlabel={\(N_\text{R}\)},
    ylabel={\(\|\mathbf{r}_{\text{total}}\|^2\) \((\mathrm{dBW})\)},
    grid=both,
    legend style={
        at={(1.025,0.5)},
        anchor=west
    },
]


\addplot[solid, very thick, myblue, mark=o, mark options={solid}]
table [x expr=\coordindex+1, y expr=20*log10(\thisrowno{3}), col sep=comma]
{CSV_files/LOS_data.csv};
\addlegendentry{Informed sampling, \(N_\text{T} = 1\)}

\addplot[solid, very thick, mygreen, mark=square, mark options={solid}]
table [x expr=\coordindex+1, y expr=20*log10(\thisrowno{4}), col sep=comma]
{CSV_files/LOS_data.csv};
\addlegendentry{Informed sampling, \(N_\text{T} = 5\)}

\addplot[solid, very thick, myorange, mark=triangle, mark options={solid}]
table [x expr=\coordindex+1, y expr=20*log10(\thisrowno{5}), col sep=comma]
{CSV_files/LOS_data.csv};
\addlegendentry{Informed sampling, \(N_\text{T} = 9\)}


\addplot[dotted, very thick, myblue, mark=o, mark options={solid}]
table [x expr=\coordindex+1, y expr=20*log10(\thisrowno{0}), col sep=comma]
{CSV_files/LOS_data.csv};
\addlegendentry{Direct sampling, \(N_\text{T} = 1\)}

\addplot[dotted, very thick, mygreen, mark=square, mark options={solid}]
table [x expr=\coordindex+1, y expr=20*log10(\thisrowno{1}), col sep=comma]
{CSV_files/LOS_data.csv};
\addlegendentry{Direct sampling, \(N_\text{T} = 5\)}

\addplot[dotted, very thick, myorange, mark=triangle, mark options={solid}]
table [x expr=\coordindex+1, y expr=20*log10(\thisrowno{2}), col sep=comma]
{CSV_files/LOS_data.csv};
\addlegendentry{Direct sampling, \(N_\text{T} = 9\)}


\addplot[dashed, very thick, myblue, mark=o, mark options={solid}]
table [x expr=\coordindex+1, y expr=20*log10(\thisrowno{6}), col sep=comma]
{CSV_files/LOS_data.csv};
\addlegendentry{Continuous RIS, \(N_\text{T} = 1\)}

\addplot[dashed, very thick, mygreen, mark=square, mark options={solid}]
table [x expr=\coordindex+1, y expr=20*log10(\thisrowno{7}), col sep=comma]
{CSV_files/LOS_data.csv};
\addlegendentry{Continuous RIS, \(N_\text{T} = 5\)}

\addplot[dashed, very thick, myorange, mark=triangle, mark options={solid}]
table [x expr=\coordindex+1, y expr=20*log10(\thisrowno{8}), col sep=comma]
{CSV_files/LOS_data.csv};
\addlegendentry{Continuous RIS, \(N_\text{T} = 9\)}

\end{axis}
\end{tikzpicture}

\caption{Received total power, \(\|\mathbf{r}_{\text{total}}\|^2\), by \(N_\text{R}\) receiving antennas for \(1\) (blue color), \(5\) (green color), and \(9\) (orange color) transmitting antennas, \(N_\text{T}\), as evaluated by Algorithm~\ref{alg:better} for the binary RIS (solid line), and compared with direct sampling for the binary RIS (dotted line) and the continuous-phase RIS benchmark (dashed line).}

\label{fig:LOS_evaluation1}
\end{figure*}

Figure~\ref{fig:LOS_evaluation1} shows that restricting the search to configurations satisfying the geometric condition of Lemma~\ref{lem:essential_x} leads to a substantial improvement over uninformed sampling. Under the same budget of \(N_{\text{samples}}=1000\) candidate evaluations, the proposed method achieves gains of at least \(5.36~\mathrm{dB}\) over direct sampling across the considered configurations, reaching \(7.11~\mathrm{dB}\) for \(N_\text{T}=1\) and \(N_\text{R}=5\). Therefore, the improvement does not result from evaluating more configurations, but from using the available sampling budget exclusively on structurally admissible candidates.

The informed-sampling curves also exhibit considerably less fluctuation than those obtained through direct sampling as \(N_\text{R}\) varies. This behavior indicates that the geometric restriction consistently generates higher-quality candidates under the same computational budget, whereas uninformed sampling remains more sensitive to the random exploration of the exponentially large binary space.

As expected, the ideal continuous-phase RIS achieves the highest received power. Nevertheless, the geometry-informed 1-bit RIS remains within \(3.12~\mathrm{dB}\) of the continuous-phase benchmark. This result indicates that effective exploration of the binary configuration space can recover a substantial fraction of the performance offered by ideal continuous phase control, despite the severe 1-bit hardware constraint.

\subsection{ISAC Evaluation}

We next evaluate the proposed framework in the ISAC setting described in Section~\ref{sec:isac}. The objective of this experiment is not only to quantify the communication--sensing trade-off, but also to demonstrate that the same binary optimization principle developed in Section~\ref{sec:optimization} can be applied to both functionalities.

The sensing area \(A\) is defined as a cube with side length \(5\lambda_c\):
\begin{equation}
    A = [0,5\lambda_c] \times [-2.5\lambda_c,2.5\lambda_c] \times [0,5\lambda_c].
\end{equation}
We consider \(N_\text{T}=1\) transmitting antenna positioned at \(\boldsymbol{\psi}^\text{T}_1=(0,0,2.5\lambda_c)\), and \(N_\text{R}=3\) receiving antennas positioned at \(\boldsymbol{\psi}^\text{R}_1=(5\lambda_c,1.25\lambda_c,0)\), \(\boldsymbol{\psi}^\text{R}_2=(2.5\lambda_c,2.5\lambda_c,5\lambda_c)\), and \(\boldsymbol{\psi}^\text{R}_3=(5\lambda_c,-2.5\lambda_c,2.5\lambda_c)\). The reflection coefficient is set to \(\alpha=0.8\).

We consider two RIS grid sizes, \(8 \times 8\) and \(16 \times 16\). For each RIS size, nine allocations of \((S_1,S_2)\) are evaluated, ranging uniformly from \((0,N)\), corresponding to full sensing, to \((N,0)\), corresponding to full communication, where the total number of RIS elements \(N\) is either \(64\) or \(256\). A set of \(100\) random target positions is generated within the sensing region \(A\), and the same target set is used across all RIS allocations to ensure comparability. Target position estimation is performed by maximizing the confidence function \(\eta\) through a hybrid optimization procedure. Specifically, \(\eta\) is first evaluated over a coarse grid to identify promising regions of the search space, after which Bayesian optimization is applied locally to refine the estimate in continuous space.

Communication performance is evaluated through the average received power. We consider both the power of the complete received signal, \(\mathbf{r}_\text{total}\) and the power of the signal without the reflections from the target \(\mathbf{r}_\text{com}\), as defined in Eq.~\eqref{eq:r}.
The corresponding metrics, \(\|\mathbf{r}_{\text{total}}\|^2\) and \(\|\mathbf{r}_{\text{com}}\|^2\), are averaged over all evaluations of \(\eta\) and reported in \(\mathrm{dBW}\).  Sensing performance is evaluated through the average localization error and the percentage of missed detections, where a detection is considered missed when its localization error exceeds the threshold \(T_e=\lambda_c/4\). The results are presented in Fig.~\ref{fig:ISAC_evaluation}.

\begin{figure}[ht]

\centering

\begin{tikzpicture}

\begin{groupplot}[
    group style={
        group size=1 by 3,
        vertical sep=0.1cm,
    },
    width=0.95\columnwidth,
    height=0.5\columnwidth,
    grid=both,
    yticklabel style={
        text width=1.8em,
        align=right
    },
    ylabel style={
        xshift=0em
    },
    legend columns=1,
    legend to name={mylegend},
    legend style={
        at={(2.025,0.5)},
        anchor=south
    },
]

\nextgroupplot[
    ylabel={\small{\(\mathrm{Error}~(\lambda_c)\)}},
    xtick={1,2,3,4,5,6,7,8,9},
    xticklabels={},
    ymode=log,
]

\addplot[
    solid, very thick, myorange, mark=triangle, mark options={solid}
]
table [x expr=\coordindex+1, y expr=2.5*\thisrowno{0}, col sep=comma]
{CSV_files/errors_8.csv};

\addplot[
    solid, very thick, myblue, mark=square, mark options={solid}
]
table [x expr=\coordindex+1, y expr=2.5*\thisrowno{0}, col sep=comma]
{CSV_files/errors_16.csv};

\addplot[
    dotted,
    black,
    very thick,
    domain=1:9,
] {0.25};

\node[
    draw,
    fill=white,
    font=\scriptsize,
    anchor=north west,
    align=left
] at (rel axis cs:0.02,0.98)
{
\begin{tabular}{@{}l@{}}
\tikz[baseline=4.1ex]{\draw[dotted, very thick] (0,0)--(0.55,0);}
\(T_e\)
\end{tabular}
};

\nextgroupplot[
    ylabel={\small{\(\mathrm{Power~(dBW)}\)}},
    xtick={1,2,3,4,5,6,7,8,9},
    xticklabels={},
]

\addplot[
    solid, very thick, myorange, mark=triangle, mark options={solid}
]
table [x expr=\coordindex+1, y expr=20*log10(\thisrowno{0}), col sep=comma]
{CSV_files/powers_8.csv};

\addplot[
    dotted, very thick, myorange, mark=triangle, mark options={solid}
]
table [x expr=\coordindex+1, y expr=20*log10(\thisrowno{0}), col sep=comma]
{CSV_files/powers_com_8.csv};

\addplot[
    solid, very thick, myblue, mark=square, mark options={solid}
]
table [x expr=\coordindex+1, y expr=20*log10(\thisrowno{0}), col sep=comma]
{CSV_files/powers_16.csv};

\addplot[
    dotted, very thick, myblue, mark=square, mark options={solid}
]
table [x expr=\coordindex+1, y expr=20*log10(\thisrowno{0}), col sep=comma]
{CSV_files/powers_com_16.csv};

\node[
    draw,
    fill=white,
    font=\scriptsize,
    anchor=north west,
    align=left
] at (rel axis cs:0.02,0.98)
{
\begin{tabular}{@{}l@{}}
\tikz{\draw[solid, very thick] (0,0)--(0.55,0);}
\(\|\mathbf{r}_{\text{total}}\|^2\)\\
\tikz{\draw[dotted, very thick] (0,0)--(0.55,0);}
\(\|\mathbf{r}_{\text{com}}\|^2\)
\end{tabular}
};

\nextgroupplot[
    ylabel={\small{\(\mathrm{Percentage~of~misses}\)}},
    xlabel={\small{\(\mathrm{Percentage~of~RIS~elements~used~for~communication}\)}},
    xtick={1,2,3,4,5,6,7,8,9},
    xticklabels={
        \(0\),\(\),\(25\),\(\),\(50\),
        \(\),\(75\),\(\),\(100\)
    },
    ytick={0,25,50,75,100},
]

\addplot[
    forget plot, solid, very thick, myorange, mark=triangle, mark options={solid}
]
table [x expr=\coordindex+1, y expr=\thisrowno{0}*100, col sep=comma]
{CSV_files/percentages_8.csv};

\addplot[
    forget plot, solid, very thick, myblue, mark=square, mark options={solid}
]
table [x expr=\coordindex+1, y expr=\thisrowno{0}*100, col sep=comma]
{CSV_files/percentages_16.csv};

\addlegendimage{solid, very thick, myblue, mark=square, mark options={solid}}
\addlegendentry{RIS grid size: \(16 \times 16\)}

\addlegendimage{solid, very thick, myorange, mark=triangle, mark options={solid}}
\addlegendentry{RIS grid size: \(8 \times 8\)}

\end{groupplot}

\node[
    draw,
    fill=white,
    font=\scriptsize,
    align=left,
    anchor=south
] at ([yshift=2mm]group c1r1.north)
{
\begin{tabular}{@{}l@{}}
RIS grid size:\quad
\tikz[baseline=-0.6ex]{
    \draw[very thick,myblue] (0,0)--(0.5,0);
}
\(16 \times 16\)
\quad
\tikz[baseline=-0.6ex]{
    \draw[very thick,myorange] (0,0)--(0.5,0);
}
\(8 \times 8\)
\end{tabular}
};

\end{tikzpicture}

\caption{Localization error (top plot, with the dotted line indicating the error threshold \(T_e\)), power of the total received vector, \(\|\mathbf{r}_\text{total}\|^2\) (middle plot, solid lines), power considering only reflections from the communication-oriented RIS elements, \(\|\mathbf{r}_\text{com}\|^2\) (middle plot, dotted lines), and percentage of missed detections with localization error larger than \(T_e\) (bottom plot), as functions of the percentage of RIS elements dedicated to communication. Each metric is shown for RIS grid sizes of \(8 \times 8\) (orange color) and \(16 \times 16\) (blue color).}

\label{fig:ISAC_evaluation}
\end{figure}

Figure~\ref{fig:ISAC_evaluation} illustrates the trade-off induced by allocating RIS elements between the two functionalities. As the fraction of communication-oriented elements increases, both \(\|\mathbf{r}_{\text{total}}\|^2\) and \(\|\mathbf{r}_{\text{com}}\|^2\) increase, while localization performance gradually degrades. The received power increases from \(24.83~\mathrm{dBW}\) and \(32.26~\mathrm{dBW}\) to \(30.08~\mathrm{dBW}\) and \(41.66~\mathrm{dBW}\) for the \(8 \times 8\) and \(16 \times 16\) RIS configurations, respectively. Furthermore, the contribution of the communication-oriented elements progressively approaches the total received power and becomes almost identical to it when all RIS elements are assigned to communication. The opposite trend is observed for sensing. In the full-sensing configuration, the miss ratios are \(11\%\) and \(0\%\) for the \(8 \times 8\) and \(16 \times 16\) RISs, respectively, increasing to \(85\%\) and \(82\%\) in the full-communication configuration. Similarly, the average localization error increases from \(0.13\lambda_c\) and \(0.02\lambda_c\) to \(1.57\lambda_c\) and \(1.59\lambda_c\), respectively.

More importantly, the results show that the communication--sensing trade-off is not necessarily proportional to the fraction of RIS elements assigned to each task. In particular, the miss-percentage curves remain relatively flat for low and moderate communication allocations before increasing rapidly as the system approaches the full communication mode. In contrast, the received power increases steadily across the entire allocation range, following an approximately linear trend (when expressed in dBW). This behavior reveals an operating region in which communication performance can be improved without a proportional degradation of sensing performance. In particular, for  the \(8 \times 8\) RIS with \(32\) elements allocated to communication and \(32\) elements allocated to sensing, we obtain \(\|\mathbf{r}_{\text{total}}\|^2 = 26.04~\mathrm{dBW}\) and \(\|\mathbf{r}_{\text{com}}\|^2 = 24.43~\mathrm{dBW}\), while maintaining an average localization error of \(0.21\lambda_c\) and a miss ratio of \(22\%\). Similarly, for the \(16 \times 16\) RIS with \(160\) communication and \(96\) sensing elements, we obtain \(\|\mathbf{r}_{\text{total}}\|^2 = 37.17~\mathrm{dBW}\), \(\|\mathbf{r}_{\text{com}}\|^2 = 36.66~\mathrm{dBW}\), an average localization error of \(0.058\lambda_c\), and a miss ratio of only \(1\%\).

Overall, the ISAC evaluation demonstrates that the geometric binary optimization framework is not confined to communication-oriented LOS enhancement. Although communication and sensing involve different channel-dependent vectors and operational objectives, both can be handled through the same optimization principle, while the RIS partition provides a direct mechanism for controlling their performance trade-off.

\section{Conclusions}\label{sec:conclusions}

This work investigated 1-bit RIS optimization from a geometric perspective. Rather than treating the \(2^N\) binary configurations as an unstructured search space, we showed that every globally optimal solution must satisfy a common geometric condition, with its binary signs induced by the projections of channel-dependent vectors onto a common direction. For general MIMO systems, this structure was exploited through a geometry-informed sampling method that restricts the search to structurally admissible configurations, while in the SISO case the same principle reduces to a two-dimensional angular partition that allows the complete candidate set to be characterized and the global optimum to be recovered through polynomial-time enumeration. The numerical results confirm the benefit of exploiting this structure over uninformed binary sampling while maintaining performance close to the continuous-phase benchmark. Finally, the same optimization principle was applied to an ISAC scenario, where communication enhancement and target localization reduce to the same underlying binary geometric problem, providing a basis for exploring RIS-assisted sensing in more complex indoor environments. Overall, the results show that the discrete nature of practical RIS hardware does not render the phase optimization intractable, but it reveals exploitable structure that can support efficient implementation across different RIS functionalities.

\bibliographystyle{IEEEtran}
\bibliography{my_bib}

@misc{LiaskosEtAl,
      title={A Tutorial on Controlling Metasurfaces from the Network Perspective}, 
      author={Christos Liaskos and others},
      year={2026},
      eprint={2601.12118},
      archivePrefix={arXiv},
      primaryClass={cs.NI},
      url={https://arxiv.org/abs/2601.12118}, 
}

@ARTICLE{LiaskosEtAl2,
  author={Liaskos, Christos and others},
  journal={Proceedings of the IEEE}, 
  title={Software-Defined Reconfigurable Intelligent Surfaces: From Theory to End-to-End Implementation}, 
  year={2022},
  volume={110},
  number={9},
  pages={1466-1493},
  doi={10.1109/JPROC.2022.3169917}}

@misc{PutrantoHamzaMabroukiArmiDayoup,
      title={Reconfigurable Intelligent Surfaces for {6G} and Beyond: A Comprehensive Survey from Theory to Deployment}, 
      author={Prasetyo Putranto and Anis Amazigh Hamza and Sameh Mabrouki and Nasrullah Armi and Iyad Dayoub},
      year={2025},
      eprint={2506.19526},
      archivePrefix={arXiv},
      primaryClass={eess.SP},
      url={https://arxiv.org/abs/2506.19526}, 
}

@ARTICLE{PapadopoulosEtAl,
  author={Papadopoulos, Alexandros I. and others},
  journal={IEEE Communications Magazine}, 
  title={Physics-Informed Metaheuristics for Fast {RIS} Codebook Compilation}, 
  year={2024},
  volume={62},
  number={11},
  pages={152-158},
  doi={10.1109/MCOM.001.2300582}}

@ARTICLE{PapadopoulosEtAl2,
  author={Papadopoulos, Alexandros I. and others},
  journal={IEEE Open Journal of the Communications Society}, 
  title={{RF}-Fencing: A Novel {RIS}-Based Service for Signal Suppression via Codebook Multiplexing}, 
  year={2026},
  volume={},
  number={},
  pages={1-1},
  doi={10.1109/OJCOMS.2026.3710143}}

@ARTICLE{PapadopoulosEtAl3,
  author={Papadopoulos, Alexandros and others},
  journal={IEEE Transactions on Network and Service Management}, 
  title={On Modeling the {RIS} as a Resource: Multi-User Allocation and Efficiency-Proportional Pricing}, 
  year={2025},
  volume={22},
  number={5},
  pages={4694-4705},
  doi={10.1109/TNSM.2025.3576038}}

@ARTICLE{WuZhang,
  author={Wu, Qingqing and Zhang, Rui},
  journal={IEEE Communications Magazine}, 
  title={Towards Smart and Reconfigurable Environment: Intelligent Reflecting Surface Aided Wireless Network}, 
  year={2020},
  volume={58},
  number={1},
  pages={106-112},
  doi={10.1109/MCOM.001.1900107}}

@article{BuzziGrossiLopsVenturino,
  title={Foundations of {MIMO} radar detection aided by reconfigurable intelligent surfaces},
  author={Buzzi, Stefano and Grossi, Emanuele and Lops, Marco and Venturino, Luca},
  journal={IEEE Transactions on Signal Processing},
  volume={70},
  pages={1749--1763},
  year={2022},
  month = {March},
  publisher={IEEE},
  doi = {10.1109/TSP.2022.3157975}
}

@ARTICLE{TishchenkoEtAl,
  author={Tishchenko, Anton and others},
  journal={IEEE Communications Surveys and Tutorials}, 
  title={The Emergence of Multi-Functional and Hybrid Reconfigurable Intelligent Surfaces for Integrated Sensing and Communications - A Survey}, 
  year={2025},
  volume={27},
  number={5},
  pages={2895-2936}}

@ARTICLE{ShekhawatKashyapRaldirisShanTrichopoulos,
  author={Shekhawat, Aditya S. and Kashyap, Bharath G. and Raldiris Torres, Russell W. and Shan, Feiyu and Trichopoulos, Georgios C.},
  journal={IEEE Open Journal of Antennas and Propagation}, 
  title={A Millimeter-Wave Single-Bit Reconfigurable Intelligent Surface With High-Resolution Beam-Steering and Suppressed Quantization Lobe}, 
  year={2025},
  volume={6},
  number={1},
  pages={311-325}}

@ARTICLE{ZhaoJianChenZhaoMu,
  author={Zhao, Xianming and Jian, Mengnan and Chen, Yijian and Zhao, Yajun and Mu, Lin},
  journal={Intelligent and Converged Networks}, 
  title={Reconfigurable Intelligent Surfaces for {6G}: Engineering Challenges and the Road Ahead}, 
  year={2025},
  volume={6},
  number={1},
  pages={53-81}}

@ARTICLE{JohariEtAl,
  author={Johari, Safpbri and others},
  journal={IEEE Access}, 
  title={Design and {SDR} Validation of a 5.8 GHz 1-bit Reconfigurable Intelligent Surface With Optimized {RF} Choke}, 
  year={2025},
  volume={13},
  number={},
  pages={182556-182568},
  doi={10.1109/ACCESS.2025.3624689}}

@ARTICLE{WuZhang2,
  author={Wu, Qingqing and Zhang, Rui},
  journal={IEEE Transactions on Wireless Communications}, 
  title={Intelligent Reflecting Surface Enhanced Wireless Network via Joint Active and Passive Beamforming}, 
  year={2019},
  volume={18},
  number={11},
  pages={5394-5409},
  doi={10.1109/TWC.2019.2936025}}

@ARTICLE{HuangZapponeAlexandropoulosDebbahYuen,
  author={Huang, Chongwen and Zappone, Alessio and Alexandropoulos, George C. and Debbah, Mérouane and Yuen, Chau},
  journal={IEEE Transactions on Wireless Communications}, 
  title={Reconfigurable Intelligent Surfaces for Energy Efficiency in Wireless Communication}, 
  year={2019},
  volume={18},
  number={8},
  pages={4157-4170},
  doi={10.1109/TWC.2019.2922609}}

@INPROCEEDINGS{YuXuSchober,
  author={Yu, Xianghao and Xu, Dongfang and Schober, Robert},
  booktitle={2019 IEEE/CIC International Conference on Communications in China (ICCC)}, 
  title={{MISO} Wireless Communication Systems via Intelligent Reflecting Surfaces : (Invited Paper)}, 
  year={2019},
  volume={},
  number={},
  pages={735-740},
  doi={10.1109/ICCChina.2019.8855810}}

@ARTICLE{PerovicTranRenzoFlanagan,
  author={Perović, Nemanja Stefan and Tran, Le-Nam and Di Renzo, Marco and Flanagan, Mark F.},
  journal={IEEE Transactions on Wireless Communications}, 
  title={Achievable Rate Optimization for {MIMO} Systems With Reconfigurable Intelligent Surfaces}, 
  year={2021},
  volume={20},
  number={6},
  pages={3865-3882},
  doi={10.1109/TWC.2021.3054121}}

@INPROCEEDINGS{HuangAlexandropoulosZapponeDebbahYuen,
  author={Huang, Chongwen and Alexandropoulos, George C. and Zappone, Alessio and Debbah, Mérouane and Yuen, Chau},
  booktitle={2018 IEEE Globecom Workshops (GC Wkshps)}, 
  title={Energy Efficient Multi-User {MISO} Communication Using Low Resolution Large Intelligent Surfaces}, 
  year={2018},
  volume={},
  number={},
  pages={1-6},
  doi={10.1109/GLOCOMW.2018.8644519}}

@ARTICLE{ZhaoWuZhaoZhang,
  author={Zhao, Ming-Min and Wu, Qingqing and Zhao, Min-Jian and Zhang, Rui},
  journal={IEEE Transactions on Wireless Communications}, 
  title={Intelligent Reflecting Surface Enhanced Wireless Networks: Two-Timescale Beamforming Optimization}, 
  year={2021},
  volume={20},
  number={1},
  pages={2-17},
  doi={10.1109/TWC.2020.3022297}}

@ARTICLE{YuanLiangJoungFengLarson,
  author={Yuan, Jie and Liang, Ying-Chang and Joung, Jingon and Feng, Gang and Larsson, Erik G.},
  journal={IEEE Transactions on Communications}, 
  title={Intelligent Reflecting Surface-Assisted Cognitive Radio System}, 
  year={2021},
  volume={69},
  number={1},
  pages={675-687},
  doi={10.1109/TCOMM.2020.3033006}}

@ARTICLE{NiLiuYangTianShen,
  author={Ni, Wanli and Liu, Yuanwei and Yang, Zhaohui and Tian, Hui and Shen, Xuemin},
  journal={IEEE Transactions on Wireless Communications}, 
  title={Integrating Over-the-Air Federated Learning and Non-Orthogonal Multiple Access: What Role Can {RIS} Play?}, 
  year={2022},
  volume={21},
  number={12},
  pages={10083-10099},
  doi={10.1109/TWC.2022.3181214}}

@ARTICLE{ZengZhangDiHanSong,
  author={Zeng, Shuhao and Zhang, Hongliang and Di, Boya and Han, Zhu and Song, Lingyang},
  journal={IEEE Communications Letters}, 
  title={Reconfigurable Intelligent Surface ({RIS}) Assisted Wireless Coverage Extension: {RIS} Orientation and Location Optimization}, 
  year={2021},
  volume={25},
  number={1},
  pages={269-273},
  doi={10.1109/LCOMM.2020.3025345}}

@INPROCEEDINGS{WuZhang3,
  author={Wu, Qingqing and Zhang, Rui},
  booktitle={ICASSP 2019 - 2019 IEEE International Conference on Acoustics, Speech and Signal Processing (ICASSP)}, 
  title={Beamforming Optimization for Intelligent Reflecting Surface with Discrete Phase Shifts}, 
  year={2019},
  volume={},
  number={},
  pages={7830-7833},
  doi={10.1109/ICASSP.2019.8683145}}

@INPROCEEDINGS{YouZhengZhang,
  author={You, Changsheng and Zheng, Beixiong and Zhang, Rui},
  booktitle={ICC 2020 - 2020 IEEE International Conference on Communications (ICC)}, 
  title={Intelligent Reflecting Surface with Discrete Phase Shifts: Channel Estimation and Passive Beamforming}, 
  year={2020},
  volume={},
  number={},
  pages={1-6},
  doi={10.1109/ICC40277.2020.9149292}}

@INPROCEEDINGS{StylianopoulosGavriilidisAlexandropoulos,
  author={Stylianopoulos, Kyriakos and Gavriilidis, Panagiotis and Alexandropoulos, George C.},
  booktitle={2024 IEEE 25th International Workshop on Signal Processing Advances in Wireless Communications (SPAWC)}, 
  title={Asymptotically Optimal Closed-Form Phase Configuration of 1-bit {RISs} via Sign Alignment}, 
  year={2024},
  volume={},
  number={},
  pages={746-750},
  doi={10.1109/SPAWC60668.2024.10694052}}

@ARTICLE{VejlingKimBiscioWymeerschPopovski,
  author={Vejling, Martin V. and Kim, Hyowon and Biscio, Christophe A. N. and Wymeersch, Henk and Popovski, Petar},
  journal={IEEE Transactions on Signal Processing}, 
  title={{RIS}-Assisted High Resolution Radar Sensing}, 
  year={2025},
  volume={73},
  number={},
  pages={2940-2955},
  doi={10.1109/TSP.2025.3586551}}

@ARTICLE{ShaoYouMaChenZhang,
  author={Shao, Xiaodan and You, Changsheng and Ma, Wenyan and Chen, Xiaoming and Zhang, Rui},
  journal={IEEE Journal on Selected Areas in Communications}, 
  title={Target Sensing With Intelligent Reflecting Surface: Architecture and Performance}, 
  year={2022},
  volume={40},
  number={7},
  pages={2070-2084},
  doi={10.1109/JSAC.2022.3155546}}

@misc{SotiropoulosEtAl,
      title={{RIS}-Assisted 3D Spherical Splatting for Object Composition Visualization using Detection Transformers}, 
      author={Anastasios T. Sotiropoulos and others},
      year={2025},
      eprint={2511.02573},
      archivePrefix={arXiv},
      primaryClass={eess.SP},
      url={https://arxiv.org/abs/2511.02573}, 
}

@ARTICLE{HuangYangTangWenXiaJin,
  author={Huang, Yixuan and Yang, Jie and Tang, Wankai and Wen, Chao-Kai and Xia, Shuqiang and Jin, Shi},
  journal={IEEE Transactions on Wireless Communications}, 
  title={Joint Localization and Environment Sensing by Harnessing {NLOS} Components in {RIS}-Aided mmWave Communication Systems}, 
  year={2023},
  volume={22},
  number={12},
  pages={8797-8813},
  doi={10.1109/TWC.2023.3266343}}

@ARTICLE{ElzanatyGuerraGuidiAlouni,
  author={Elzanaty, Ahmed and Guerra, Anna and Guidi, Francesco and Alouini, Mohamed-Slim},
  journal={IEEE Transactions on Signal Processing}, 
  title={Reconfigurable Intelligent Surfaces for Localization: Position and Orientation Error Bounds}, 
  year={2021},
  volume={69},
  number={},
  pages={5386-5402},
  doi={10.1109/TSP.2021.3101644}}

@ARTICLE{WuEtAl,
  author={Wu, Qingqing and others},
  journal={IEEE Communications Surveys and Tutorials}, 
  title={Intelligent Reflecting Surfaces for Integrated Sensing and Communications: From System Coexistence to Networked Mutualism}, 
  year={2026},
  volume={28},
  number={},
  pages={6057-6100},
  doi={10.1109/COMST.2026.3686910}}

@INPROCEEDINGS{ZhangRenShenLuo,
  author={Zhang, Yaowen and Ren, Shuyi and Shen, Kaiming and Luo, Zhi-Quan},
  booktitle={2021 IEEE Global Communications Conference (GLOBECOM)}, 
  title={Optimal Discrete Beamforming for Intelligent Reflecting Surface}, 
  year={2021},
  volume={},
  number={},
  pages={1-6},
  doi={10.1109/GLOBECOM46510.2021.9685790}}

@inbook{LiSunGuGaoLiu,
author = {Li, Duan and Sun, Xiaoling and Gu, Shenshen and Gao, Jianjun and Liu, Chunli},
year = {2010},
month = {05},
pages = {199-225},
title = {Polynomially Solvable Cases of Binary Quadratic Programs},
volume = {39},
isbn = {978-0-387-89495-9},
doi = {10.1007/978-0-387-89496-6_11}
}

@ARTICLE{MunteLazaroVillarino,
  author={Munte, Nil and Lazaro, Antonio and Villarino, Ramon and Girbau, David},
  journal={IEEE Sensors Journal}, 
  title={Vehicle Occupancy Detector Based on FMCW mm-Wave Radar at 77 GHz}, 
  year={2022},
  volume={22},
  number={24},
  pages={24504-24515},
  doi={10.1109/JSEN.2022.3218454}}

@ARTICLE{SatoWandaleIchigeKimuraSugiura,
  author={Sato, Kotone and Wandale, Steven and Ichige, Koichi and Kimura, Kazuya and Sugiura, Ryo},
  journal={IEEE Sensors Journal}, 
  title={Millimeter-Wave Radar-Based Vehicle In-Cabin Occupancy Detection Using Explainable Machine Learning}, 
  year={2024},
  volume={24},
  number={15},
  pages={24288-24298},
  doi={10.1109/JSEN.2024.3413775}}

\end{document}